\newtheorem{proposition}{Proposition}
\newtheorem{theorem}{Theorem}
\newtheorem{lemma}{Lemma}
\def\Bmp#1{ \begin{minipage}{#1} }
\def\Bmpc#1{ \begin{minipage}[c]{#1} }
\def\Bmpt#1{ \begin{minipage}[t]{#1} }
\def\Bmpb#1{ \begin{minipage}[b]{#1} }
\def\Emp{ \end{minipage} }
\DeclareMathOperator{\arccoth}{arcCoth} 
\def\CC{{\mathbb{C}}}
\def\RR{{\mathbb{R}}}
\def\NN{{\mathbb{N}}}
\def\ZZ{{\mathbb{Z}}}
\def\Dpartial#1#2{ {\frac{\partial #1}{\partial #2} }}
\providecommand*{\upY}{%
  \mathbin{%
    \mathpalette\@updownY{0}%
  }%
}
\providecommand*{\downY}{%
  \mathbin{%
    \mathpalette\@updownY{1}%
  }%
}
\providecommand*{\UpDownYFactor}{1}
\newcommand*{\@updownY}[2]{%
  \sbox0{$#1+\m@th$}%
  \dimen2=.5\dimexpr\wd0-\ht0-\dp0\relax
  \sbox2{$#1\vcenter{}$}%
  \dimen4=\dimexpr\ht0-\ht2\relax
  \setbox0=\hbox to 0pt{%
    \hss
    \clipbox{%
      0pt %
      {\dimexpr\totalheight-\UpDownYFactor\dimen4\relax} %
      0pt %
      -\dimen2%
    }{$#1|$}%
    \hss
  }%
  \ht0=\dimexpr\ht0-\dimen2\relax
  \kern\dimen2 %
  \raise\ht2\hbox{%
    \ifnum#2=0 %
      {\rotatebox{120}{\copy0}}%
      \copy0 %
      {\rotatebox{-120}{\copy0}}%
    \else
      {\rotatebox{60}{\copy0}}%
      {\rotatebox{180}{\copy0}}%
      {\rotatebox{-60}{\copy0}}%
    \fi
  }%
  \kern\dimen2 %
}
\begin{document}
\title{Rotating Equilibria of Vortex Sheets}

\author{Bartosz Protas$^{1, }$\thanks{Email address for correspondence: bprotas@mcmaster.ca} \ and Takashi Sakajo$^{2}$
\\ \\ 
$^1$ Department of Mathematics and Statistics, McMaster University \\
Hamilton, Ontario, L8S 4K1, Canada
\\ \\
$^2$ Department of Mathematics, Kyoto University \\ 
Kitashirakawa Oiwake-cho, Sakyo-ku, Kyoto, 606-8502, Japan
}

\date{\today}

\maketitle

\begin{abstract}
  We consider relative equilibrium solutions of the two-dimensional
  Euler equations in which the vorticity is concentrated on a union of
  finite-length vortex sheets. Using methods of complex analysis, more
  specifically the theory of the Riemann-Hilbert problem, a general
  approach is proposed to find such equilibria which consists of two
  steps: first, one finds a geometric configuration of vortex sheets
  ensuring that the corresponding circulation density {is
    real-valued and also} vanishes at all sheet endpoints such that
  the induced velocity field is well-defined; then, the circulation
  density is determined by evaluating a certain integral formula. As
  an illustration of this approach, we construct a family of rotating
  equilibria involving different numbers of straight vortex sheets
  rotating about a common center of rotation and with endpoints at the
  vertices of a regular polygon. This equilibrium generalizes the
  well-known solution involving single rotating vortex sheet. With the
  geometry of the configuration specified analytically, the
  corresponding circulation densities are obtained in terms of a
  integral expression which in some cases lends itself to an explicit
  evaluation. It is argued that as the number of sheets in the
  equilibrium configuration increases to infinity, the equilibrium
  converges in a certain distributional sense to a hollow vortex
  bounded by a constant-intensity vortex sheet, which is also a known
  equilibrium solution of the two-dimensional Euler equations.
\end{abstract}

\begin{flushleft}
Keywords:
Vortex dynamics; relative equilibria; Riemann-Hilbert problem 
\end{flushleft}

\section{Introduction}
\label{sec:intro}

Relative equilibria represent a particularly important class of
solutions of equations governing the motion of fluids. This is because
if they are stable then such equilibria describe flow structures which can
persist for long times. Here we are interested in two-dimensional (2D)
flows of inviscid incompressible fluids with localized vorticity,
although models similar to the ones developed here are also relevant
for plasmas and quantum fluids, in particular, Bose-Einstein
condensates. Arguably, the best-known equilibrium solutions
characterizing such flows involve point vortices where the vorticity
is supported as Dirac measures on a set of discrete points. There
exists a very rich literature on this topic and we refer the reader to
the review papers
\cite{VortexCrystals2003,Aref2007,Clarkson2009,Newton2014} for a
survey. Desingularization of point vortices by replacing them with
finite-area regions of constant vorticity leads to the so-called
vortex-patch problem which is of the free-boundary type, since the
shapes of the patch boundaries need to be found as a part of the
solution of the problem. Such problems have also received significant
attention starting with the seminal work of Pierrehumbert
\cite{Pierrehumbert1980}, Saffman \& Szeto \cite{ss80}, Dritschel
\cite{d85}, Kamm \cite{k87}, Moore et al.~\cite{mst88}, and in this
context we also mention a more recent investigation \cite{fw10a}. 
Some theoretical results concerning the
desingularization process can be found in \cite{Wan1988b} and
\cite{gipz10a}.

Vortex sheets represent discontinuities of the tangential velocity
component, such that vorticity is localized along one-dimensional (1D)
curves \cite{fundam:saffman1}. They have received attention especially
in the context of fluid-structure interaction where they have been
used as simple models of vortex shedding
\cite{Jones2003,hsw07,Shukla2007,Alben2009}. However, unlike the case
of point-vortex or vortex-patch equilibria, our knowledge of
equilibrium configurations involving vortex sheets is rather limited.
Two closed-form solutions are available as regards equilibria involving
closed or unbounded vortex sheet (i.e., sheets possessing no
endpoints): the circular vortex sheet and the infinite (or periodic)
straight vortex sheet, both with constant circulation density and
defined in an unbounded domain
\cite{fundam:batchelor,fundam:saffman1}. Closed vortex sheets play a
role in the Sadovskii flows \cite{s71} in which they separate regions
with constant vorticity from the irrotational flow and they were also
used to construct ``hollow vortices'' in which the vortex sheet
separates a stagnant region of the flow from the region in potential motion
\cite{Pocklington1895,bss76}. More recently such configurations
were obtained for flows past an obstacle \cite{tz11}. A distinguishing
feature of all of these solutions is that closed vortex sheets
separate regions characterized by different values of the Bernoulli
constant.

As regards relative equilibria involving {\em non-closed} vortex
sheets, i.e., flows in which the vorticity support is not a closed
curve, there is one solution known in the analytic form, namely, a
straight sheet of length $2a$ ($a>0$) rotating about its center with
the angular velocity $\omega$ \cite{fundam:batchelor}. The circulation
density is given by
\begin{equation}
\gamma(x) = \omega \sqrt{a^2 - x^2}, \qquad x \in [-a,a],
\label{eq:sp1}
\end{equation}
and, as shown in \cite{fundam:batchelor}, this equilibrium can be
regarded as the limiting solution obtained from Kirchhoff's rotating
ellipse by fixing its major axis and letting the minor axis vanish
{while allowing the vorticity to become unbounded in such a way
  that the circulation remains constant}.  Several approximations of
relative sheet equilibria were computed by O'Neil \cite{ONeil2009},
see also \cite{ONeil2010}, who represented the sheets using a number
of unconnected point vortices arranged along finite arcs. More
recently, an exact method has been devised by the same author to
compute relative equilibrium configurations involving both vortex
sheets and point vortices \cite{ONeil2018a,ONeil2018b}.

In the present investigation we introduce a novel approach allowing
one to construct a family of rotating equilibria involving vortex
sheets which generalizes the equilibrium solution \eqref{eq:sp1}.
Hereafter heavy use will be made of complex notation. Assuming that
$z(\xi,t) = x(\xi,t) + iy(\xi,t)$ represents the position at time $t$
of a point on a vortex sheet $L$ corresponding to the arc-length
parameter $\xi$, the evolution of the sheet is governed by the
Birkhoff-Rott singular integro-differential equation
\cite{fundam:saffman1}
\begin{equation}
\Dpartial{z^\ast}{t}(\xi,t) = V(z(\xi,t))
\,\equiv\, \frac{1}{\pi i}\mbox{pv} \int_L \frac{\gamma(s)}{ z(s,t) - z(\xi,t) }\,ds,
\label{eq:BR}
\end{equation}
where $\gamma(s)$ is the circulation density of the vortex sheet which
can be interpreted as the difference between the tangential velocity
components on the two sides of the sheet and the asterisk denotes
complex conjugation. In \eqref{eq:BR} integration is defined in the
sense of Cauchy's principal value.  We note that in general the sheet
$L$ may consist of a number of disjoint segments. Because of its
ill-posed nature, the Birkhoff-Rott equation \eqref{eq:BR} has
generated a lot of interest in the applied mathematics literature
\cite{c89a,mb02}.  Since we are interested here in relative
equilibria, i.e., equilibria attained in a moving --- rotating or
translating --- frame of reference, the complex velocity $V(z) =
(u-iv)(z)$, where $u$ and $v$ are, respectively, the $x$ and $y$
components, will be given in terms of some prescribed function $f(z)$
independent of time. For relative equilibria rotating with the rate of
rotation $\omega \in \RR$, this function has the form $f(z) = i \omega
z^*$, whereas for relative equilibria translating with a constant
velocity $U \in \CC$, it has the form $f(z) = U$. We note that in
either case the function $f(z)$ is defined up to an arbitrary real
multiplicative constant which can be factored out by rescaling the
circulation density $\gamma(s)$.

The particular question we consider in this study is whether there
exist relative equilibrium solutions of the Birkhoff-Rott equation
\eqref{eq:BR} consisting of a certain number $p \ge 1$ of sheets
$L_m$, $m=0,\dots,p-1$, assumed disjoint, possibly except for a finite
number of points of contact, such that we have $L =
\bigcup_{m=0}^{p-1} L_m$ in \eqref{eq:BR}. The endpoints of the sheets
will be denoted $a_m, b_m \in \CC$ ($a_m, b_m \in L_m$),
$m=0,\dots,p-1$. Thus, the problem of finding such relative equilibria
is equivalent to the question whether the integral equation
\begin{equation}
\frac{1}{\pi i} \mbox{pv} \int_L \frac{\gamma(s) \, ds}{z(s) - z(\xi)} = f(z), \qquad \forall z(\xi) \in L
\label{eq:BR0}
\end{equation}
admits {suitable} solutions. We emphasize that by ``solution'' we
mean here the pair $\{L,\gamma\}$ representing the geometry of the
vortex sheet and the corresponding {real-valued} distribution of
the circulation density.

Using methods of complex analysis, more specifically, the
Riemann-Hilbert theory, we develop a constructive approach allowing
one to obtain relative equilibrium solutions to \eqref{eq:BR0} which
in some situations are given in a closed form. In this formulation the
problems of determining the shapes $L$ of the sheets and the
corresponding circulation densities $\gamma$ decouple: first, one
needs to find the sheets $L$ satisfying certain conditions after which
obtaining the circulation density $\gamma$ reduces to evaluating an
integral formula. {In this sense, there is an analogy between the
  proposed approach and the ``Brownian ratchet'' method developed by
  Newton to find relative equilibria of point vortices
  \cite{Newton2007,bbn10a}.}  Using the proposed approach we then
construct a family of rotating equilibrium configurations consisting
of a different number of sheets each with one endpoint at the center
of rotation and the other at a vertex of a polygon. This family
generalizes the well-known equilibrium solution \eqref{eq:sp1}
describing a single rotating sheet and in a certain ``weak'' sense
approaches a hollow vortex bounded by a closed constant-density vortex
sheet when the number of sheets $p$ increases to infinity.

The structure of the paper is as follows: in the next section we
describe the connection between the questions considered here and the
Riemann-Hilbert problem in complex analysis which allows us to
formulate our approach; then in Section \ref{sec:rot} we introduce a
particular family of configurations of rotating vortex sheets, prove
that they constitute relative equilibria and and obtain expressions for
the circulation densities; next, in Section \ref{sec:numer}, we
describe and validate a numerical approach needed to evaluate these
formulas; the results are presented in Section \ref{sec:results},
whereas discussion and final conclusions are deferred to Section
\ref{sec:final}.

\section{Vortex Sheets and the Riemann-Hilbert Problem}
\label{sec:RH}

In this section we first elucidate the relation between the problem
considered here and the Riemann-Hilbert problem of complex analysis.
Then, some elements of the theory of this problem will be used to
formulate a two-step solution approach. {Given a contour on the
  complex plane,} the Riemann-Hilbert problem consists in finding a
holomorphic function defined in {the complement of this contour
  on the complex plane such that its limiting values on both sides of
  the contour satisfy a prescribed relation \cite{af11}}.  Since
holomorphic functions are conveniently expressed in terms of singular
(Cauchy-type) integrals, there is a natural connection between the
Riemann-Hilbert problem and the theory of singular integral equations.

In a similar spirit, to determine a relative equilibrium involving
vortex sheets we need to find a velocity field, which for potential
(i.e., incompressible and irrotational) flows is represented in terms
of a holomorphic function {$V(z)$, $z \in \CC \backslash L$,
  satisfying the following kinematic conditions on the contour (sheet)
  $L$
\begin{equation}
V^+(t) - V^-(t) = \gamma(t) \frac{dt}{ds}(t), \qquad t \in L,
\label{eq:RH}
\end{equation}
where $t = t(s)$ represents the parameterization of the contour $L$ in
terms of its arc length $s$, whereas $V^\pm(t)$ are the limits of the
velocity field $V(z)$ as $z \rightarrow t \in L$ on the two sides of
the contour. Condition \eqref{eq:RH} implies} that the normal velocity
component on both sides of the sheet {is} the same and the
tangential velocity component {has} a jump equal to the
prescribed circulation density $\gamma$ (the continuity of pressure
across the sheet then follows from these conditions and the Bernoulli
equation).  {For the Riemann-Hilbert problem to be completely
  defined \cite{af11}, relation \eqref{eq:RH} must be complemented by
  a condition specifying the behavior of the function $V(z)$ as $|z|
  \rightarrow \infty$, which is determined by the function $f(z)$.
  These conditions together are equivalent to relation} \eqref{eq:BR0}
and thus the question of the existence of relative equilibria
involving vortex sheets can be recast in terms of whether equation
\eqref{eq:BR0} admits suitable solutions.

Aspects of the Riemann-Hilbert problem relevant to the questions
considered here are reviewed in monograph \cite{Muskhelishvili2008},
from which we have adopted parts of the notation. First, the integral
expressing velocity needs to be recast as a complex integral defined
with respect to the complex variable $t$, i.e., for any $\zeta \in \CC
\, \backslash \, L$ we have
\begin{equation}
V(\zeta) = \frac{1}{2\pi i} \int_L \frac{\gamma(t(s)) \, ds}{t(s) - \zeta} = \frac{1}{2\pi i} \int_L
\frac{\varphi(t) \, dt}{t - \zeta}, \label{eq:V} 
\end{equation}
where the density is defined as 
\begin{equation}
 \quad \varphi(t)  := \gamma(t) \left(\frac{dt}{ds}\right)^{-1}.
\label{eq:varphi}
\end{equation}
Problem \eqref{eq:BR0} can then be recast in a canonical form
consistent with the Riemann-Hilbert problem, namely
\begin{equation}
\frac{1}{\pi i} \mbox{pv} \int_L \frac{\varphi(t) \, dt}{t - z} = f(z), \qquad \forall  \in L.
\label{eq:BR0t}
\end{equation}
Then, we note that in order for the velocity \eqref{eq:V} to be
bounded everywhere in $\CC \, \backslash \, L$, the density
$\varphi(t)$, cf.~\eqref{eq:varphi}, must satisfy certain conditions,
namely, it must be H\"older-continuous with some H\"older index $0 <
\mu \le 1$ and must vanish at all sheet endpoints
\cite{Muskhelishvili2008}, i.e.,
\begin{equation}
\varphi(a_m) = \varphi(b_m) = 0, \quad m=0,\dots,p-1.
\label{eq:gamma0}
\end{equation}
This last condition in particular will play an essential role in our
analysis below. 

In general, problem \eqref{eq:BR0t} admits solutions $\varphi$ which
vanish at $p$ endpoints only \cite{Muskhelishvili2008}.  In order for
the densities $\varphi$ obtained from \eqref{eq:BR0t} to vanish at
{\em all} $2p$ endpoints (which is necessary for the corresponding
velocity field \eqref{eq:V} to be well-defined everywhere), the sheets
$L$ and the function $f(z)$ in \eqref{eq:BR0t} must satisfy the
following set of compatibility condition \cite[relation
(89.14)]{Muskhelishvili2008} (in essence, there is one condition for
each additional endpoint on which $\varphi(t)$ is to vanish)
\begin{equation}
C_n := \int_L \frac{t^n f(t) \, dt}{\sqrt{R_p(t)}} = 0, \quad n=0,\dots,p-1,
\label{eq:Cn}
\end{equation}
where
\begin{equation}
R_p(z) := \prod_{m=1}^p (z-a_m)(z-b_m),
\label{eq:R}
\end{equation}
where ``$:=$'' means ``equal to by definition''.  Since the function
$f(z)$ is fixed and determined by the type of the relative equilibrium
sought (rotating or translating), conditions \eqref{eq:Cn} can be
interpreted as constraints on the shapes of the vortex sheet $L$ which
must be satisfied in order for the corresponding velocity field
\eqref{eq:V} to be well defined.  {In addition, it must also be
  ensured that the resulting circulation density is real-valued which
  can often be done based on some symmetry arguments (cf.~Section
  \ref{sec:rot} below).}  Then, a density satisfying conditions
\eqref{eq:gamma0} is given by the following formula \cite[relation
(88.9)]{Muskhelishvili2008}
\begin{equation}
\varphi(z) = \frac{\sqrt{R_p(z)}}{\pi i}
\int_L \frac{f(t) \, dt}{\sqrt{R_p(t)} (t-z)} = 0, \quad z \in L.
\label{eq:gamma}
\end{equation}
Finally, the real-valued circulation density $\gamma$ needed for the
Birkhoff-Rott equation \eqref{eq:BR0} can be obtained from
\eqref{eq:gamma} and \eqref{eq:varphi} using the known parameterization
of the contour.

The above facts {demonstrate how} the problems of finding the
equilibrium shapes of vortex sheets and the corresponding circulation
densities {are related to each other}. They also suggest the
following general and constructive approach to finding relative
equilibria involving vortex sheets
\begin{enumerate}
\item[Step 1:] find arcs $L_0,\dots,L_{p-1}$ such that the
  compatibility conditions \eqref{eq:Cn} are satisfied {and the
    imaginary part of $\varphi(t) \frac{dt}{ds}(t)$ vanishes},

\item[Step 2:] determine the corresponding circulation densities by
  evaluating formula \eqref{eq:gamma} and using \eqref{eq:varphi}.
\end{enumerate}
In the next section we implement this procedure in order to find a new
family of equilibrium solutions consisting of an arbitrarily large
number of rotating vortex sheets which generalizes the well-known
solution \eqref{eq:sp1}.

\section{Rotating Equilibria}
\label{sec:rot}

\begin{figure}
\begin{center}
\includegraphics[width=1.0\textwidth]{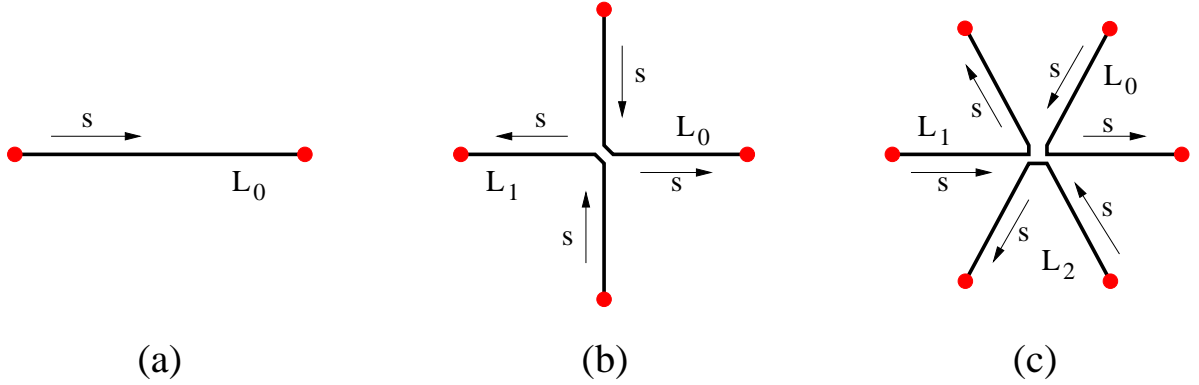}
\caption{Schematic representation of the arrangement of (a) $p=1$, (b)
  $p=2$ and (c) $p=3$ bent sheets into a polygonal configuration. Red
  circles mark the endpoints on which conditions \eqref{eq:gamma0}
  apply, whereas arrows represent the parameterization used in
  \eqref{eq:Lm}.  When $p \ge 2$ midpoints of the sheets come into
  contact at the center of rotation.}
\label{fig:sheets}
\end{center}
\end{figure}

In this section we first consider a particular arrangement of vortex
sheets and demonstrate, through a rigorous proof, that it can form a
rotating equilibrium, cf.~Step 1 in Section \ref{sec:RH}. Then, as
Step 2, we obtain a closed-form integral formula for the circulation
density of the sheet which in the simplest cases lends itself to
explicit evaluation. Since we focus on rotating equilibria, we now set
$f(z) = i \omega z^*$ and consider an arrangement of $p$ bent sheets
with common midpoints, cf.~figure \ref{fig:sheets} (the choice of this
particular configuration as a candidate for a relative equilibrium was
inspired by our earlier computational experiments). {As will be
  shown below, due to their symmetry properties these configurations
  always produce purely-real circulation densities.}  In order to
parameterize this configuration, we first define the $2p$th complex
root of unity as
\begin{equation}
\sigma = \exp\left(\frac{2\pi i}{2p}\right)
\label{eq:sigma}
\end{equation}
with the properties that $\sigma^{2p}=1$ and $|\sigma|=1$. We note
that $\sigma^m$, $m=0,\dots,2p-1$ are the complex coordinates of the
vertices of a regular $2p$-gon inscribed in a unit circle. The sheets
are then described as, cf.~figure \ref{fig:sheets},
\begin{equation}
L_m \ : \ z_m(s) = \left\{ \begin{array}{rr} -\sigma^{2m+1} s, & -1 \leqq s \leqq 0, \\ \sigma^{2m}s, & 0 \leqq s \leqq 1,  \end{array}\right. 
\label{eq:Lm}
\end{equation}
for $m=0, 1, \dots, p-1$ with
\begin{equation*}
z'_m(s) = \left\{ \begin{array}{rr} -\sigma^{2m+1}, & -1 \leqq s \leqq 0, \\ \sigma^{2m}, & 0 \leqq s \leqq 1.  \end{array}\right.,
\end{equation*}
where the prime denotes differentiation with respect to the arc length
parameter $s \in [-1,1]$. The endpoints of the $m$th sheet $L_m$ then
are $a_m=z_m(-1)=\sigma^{2m+1}$ and $b_m = z_m(1) = \sigma^{2m}$, such
that the function $R_p(z)$, cf.~\eqref{eq:R}, can be expressed as
\begin{equation}
R_p(z) = \prod_{m=0}^{p-1} (z-\sigma^{2m})(z-\sigma^{2m+1}) = \prod_{m=0}^{2p-1}(z-\sigma^m).
\label{eq:R2}
\end{equation}
We observe that the function $R_p(z)$ is invariant under
multiplication of its argument by $\sigma^\ell$ for any $\ell \in \NN$
\begin{equation}
R_p(\sigma^\ell z) = \prod_{m=0}^{2p-1}(\sigma^\ell z-\sigma^m) = \sigma^{2p\ell}\prod_{m=0}^{2p-1}(z-\sigma^{m-\ell}) = \prod_{m=0}^{2p-1}(z-\sigma^{m-\ell}) = R_p(z)
\label{eq:Rl}
\end{equation}
because this operation is equivalent to rotating the $2p$-gon formed by
the sheet endpoints by the angle $\ell \pi / p$, a transformation which
leaves this polygon invariant.

We are now in the position to state the following proposition
\begin{proposition}
  Assuming the vortex sheets $L_m$, $m=0,\dots,p-1$ are defined by
  expression \eqref{eq:Lm} and that $f(z) = i \omega z^*$, the
  compatibility conditions \eqref{eq:Cn} are satisfied for any
  positive integer $p$.
\end{proposition}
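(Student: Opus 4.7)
The plan is to exploit the rotational symmetry of the configuration together with the algebraic identification $R_p(z)=z^{2p}-1$, which follows at once from \eqref{eq:R2} since $\{\sigma^m:m=0,\dots,2p-1\}$ is the complete set of $2p$-th roots of unity. In particular $R_p$ takes real negative values on each sheet $L_m$, and the rotation identity \eqref{eq:Rl} specializes to $R_p(\sigma^{2m}z)=R_p(z)$.

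The main step is a change of variables. Writing $C_n=\sum_{m=0}^{p-1} I_m$ with $I_m := \int_{L_m} t^n f(t)\,dt/\sqrt{R_p(t)}$, I would substitute $t=\sigma^{2m}t'$, which maps $L_m$ onto $L_0$ with the parametrization preserved. A direct calculation using $t^n=\sigma^{2mn}(t')^n$, $t^*=\sigma^{-2m}(t')^*$, and $dt=\sigma^{2m}dt'$ yields
\begin{equation*}
t^n f(t)\,dt \;=\; i\omega\,\sigma^{2mn}\,(t')^n(t')^*\,dt' \;=\; \sigma^{2mn}(t')^n f(t')\,dt',
\end{equation*}
while the globally normalized branch $\sqrt{R_p}$ is invariant under $z\mapsto\sigma^{2m}z$ (both $\sqrt{R_p(\sigma^{2m}z)}$ and $\sqrt{R_p(z)}$ have square $R_p(z)$ and share the asymptotic $\sim z^p$ at infinity, using $\sigma^{2mp}=1$). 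Hence $I_m=\sigma^{2mn} I_0$ and
\begin{equation*}
C_n \;=\; I_0 \sum_{m=0}^{p-1}\sigma^{2mn}.
\end{equation*}
For $n=1,\dots,p-1$ the geometric sum vanishes because $\sigma^{2n}=e^{2\pi i n/p}\neq 1$, giving $C_n=0$ at once.

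The remaining case $n=0$ requires showing $I_0 = \int_{L_0} i\omega t^*\,dt/\sqrt{R_p(t)} = 0$. Parametrizing $L_0$ by $s\in[-1,1]$ according to \eqref{eq:Lm}, one checks directly that $t^*\,dt = s\,ds$ on \emph{both} halves (the factors of $\sigma$ and $\sigma^*$ cancel) and $R_p(t)=s^{2p}-1$ on both halves. Because $0$ is a regular point of $R_p$, a single side of the cut yields the continuous boundary value $\sqrt{R_p(t)} = \pm i\sqrt{1-s^{2p}}$ throughout $s\in[-1,1]$, so
\begin{equation*}
I_0 \;=\; \pm\,\omega \int_{-1}^{1} \frac{s\,ds}{\sqrt{1-s^{2p}}} \;=\; 0
\end{equation*}
by odd symmetry of the integrand.

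The only delicate point is the branch handling: one must verify that the chosen side of $L_0$ gives a continuous value of $\sqrt{R_p}$ through the origin (where $p$ sheets meet for $p\ge 2$), and that the rotational identity $\sqrt{R_p(\sigma^{2m}z)}=\sqrt{R_p(z)}$ holds for that specific branch on both sides of $L$. Both facts follow because $R_p(0)=-1\neq 0$ makes the origin a regular point of the double cover of $\sqrt{R_p}$, and the asymptotic normalization at infinity pins down the sign consistently across the orientation-preserving rotation.
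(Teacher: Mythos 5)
Your proof is correct, and it reaches the same two pillars as the paper's argument -- the vanishing of the geometric sum $\sum_{m=0}^{p-1}\sigma^{2mn}$ for $1\le n\le p-1$, and odd symmetry in $s$ for the $n=0$ case -- but it organizes them differently. The paper expands $C_n$ into all $2p$ half-sheet integrals using the pointwise identity $t^n t^*\,dt = t^{n-1}|t|^2\,dt$ on each ray and then recombines the resulting coefficients into the factor $(1-\sigma^n)\sum_m \sigma^{2mn}$; you instead pull every $L_m$ back to $L_0$ by the substitution $t=\sigma^{2m}t'$, which immediately yields the clean factorization $C_n = I_0\sum_m \sigma^{2mn}$ and isolates the single-sheet integral $I_0$ as the only object requiring computation (needed only for $n=0$). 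The two routes are equivalent in content, but yours is more economical and makes the role of the $2p$-fold rotational symmetry structurally explicit rather than emerging from term-by-term cancellation. You are also more careful than the paper on the one genuinely delicate point: the paper simply asserts via \eqref{eq:Rl} that the ``denominators remain unchanged,'' which is a statement about $R_p$, not about the chosen branch of $\sqrt{R_p}$ on $L$; your normalization-at-infinity argument (using $\sigma^{2mp}=1$) together with the observation that $R_p(0)=-1\neq 0$ makes the origin a regular point correctly justifies both the rotation invariance of the branch and the continuity of the boundary value through the corner at $s=0$, where for $p\ge 2$ the sheets meet. The only detail left implicit is that the ``same side'' of the two halves of $L_0$ corresponds to the same sector at the origin (so the two one-sided limits are taken from the same local branch), but this is immediate from the orientation in \eqref{eq:Lm}.
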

\begin{proof}
  We begin by observing that, because of property \eqref{eq:Rl}, the
  denominators in the integrand expressions in constraints
  \eqref{eq:Cn} remain unchanged when the constraint expressions are
  evaluated for different sheets. Thus, noting that
\begin{equation*}
C_n = i \omega \int_L \frac{t^n t^* \, dt}{\sqrt{R_p(t)}} 
= i \omega \sum_{m=0}^{p-1}\int_{L_m} \frac{t^n t^* \, dt}{\sqrt{R_p(t)}}, \quad n=0,\dots,p-1,
\end{equation*}
we obtain for the integral defined on the $m$th sheet when $n=0$
\begin{align}
i\omega\int_{L_m} \frac{t^*}{\sqrt{R_p(t)}}\,dt  &= i\omega \int_{-1}^0 \frac{(-\sigma^\ast)^{2m+1}s}{\sqrt{R_p(t(s))}} (-\sigma)^{2m+1} \, ds + i\omega \int_0^1 \frac{(\sigma^\ast)^{2m} s}{\sqrt{R_p(t(s))}} \sigma^{2m} \, ds \nonumber \\
       &= i\omega \int_{-1}^0 \frac{s}{\sqrt{R_p(t(s))}} \,ds + i\omega \int_0^1 \frac{s}{\sqrt{R_p(t(s))}} \, ds = 0, \quad m = 0,\dots,p-1,
\label{eq:C_00}
\end{align}
which is true because the expression in the denominator,
cf.~\eqref{eq:R}, remains the same for all sheets $L_m$,
cf.~\eqref{eq:Rl}.  Hence, we have $C_0=0$. Then, when $1 \leqq n <
p$, owing to the identity $z^kz^* dz = z^{k-1}| z |^2 dz$, the
integral corresponding to the $m$th sheet becomes, again recognizing
property \eqref{eq:Rl},
\begin{align*}
i\omega\int_{L_m} \frac{t^n t^*}{\sqrt{R_p(t)}}\,dt &= i \omega \int_{-1}^0 \frac{(-\sigma^{2m+1})^{n-1} s^{n+1}}{\sqrt{R_p(t(s))}} (-\sigma^{2m+1}) \,ds +i \omega  \int_0^1 \frac{(\sigma^{2m})^{n-1} s^{n+1}}{\sqrt{R_p(t(s))}} \sigma^{2m} \, ds\\
&= i \omega \int_{-1}^0 \frac{(-1)^n \sigma^{(2m+1)n} s^{n+1}}{\sqrt{R_p(t(s))}} \,ds +i \omega  \int_0^1 \frac{\sigma^{2mn} s^{n+1}}{\sqrt{R_p(t(s))}}\, ds.
\end{align*}
Adding up the integrals corresponding to the sheets $L_m$,
$m=0,\dots,p-1$, we obtain 
\begin{align}
C_n &=  i \omega (-1)^n \sum_{m=0}^{p-1} \sigma^{(2m+1)n}\int_{-1}^0 \frac{ s^{n+1}}{\sqrt{R_p(t(s))}}\, ds +i \omega  \sum_{m=0}^{p-1} \sigma^{2mn} \int_0^1 \frac{ s^{n+1}}{\sqrt{R_p(t(s))}} \, ds \nonumber \\
& = i \omega \left[ -\sum_{m=0}^{p-1} \sigma^{(2m+1)n} + \sum_{m=0}^{p-1} \sigma^{2mn} \right] \int_0^1 \frac{s^{n+1} \, ds}{\sqrt{R_p(t(s))}} \nonumber \\
& = i \omega \left[ \sum_{m=0}^{p-1} \sigma^{2mn} \right] \left(1 - \sigma^n\right) \int_0^1 \frac{s^{n+1} \, ds}{\sqrt{R_p(t(s))}} = 0, \quad n=1,\dots,p-1, \label{eq:C_k0}
\end{align}
which vanishes because due to the properties of the geometric
progression and the fact that $\sigma^{2n} \neq 1$ for $n=1, \dots, p-1$
we have
\begin{equation*}
\sum_{m=0}^{p-1} \sigma^{2mn} = \frac{1-\sigma^{2pn}}{1-\sigma^{2n}} = 0.
\end{equation*}
Thus, relations \eqref{eq:C_00} and \eqref{eq:C_k0} prove our claim.
\end{proof}

Having established that the configuration of $p$ bent vortex sheets
defined in \eqref{eq:Lm}, cf.~figure \ref{fig:sheets}, can give rise
to a rotating equilibrium, we now proceed to determine the density
$\varphi_p(z)$ using formula \eqref{eq:gamma} which we transform as
follows in order to reduce it to a definite integral on $[0,1]$
\begin{align}
\varphi_p(z) 
&= \frac{\sqrt{R_p(z)}}{\pi i} \int_L \frac{i \omega t^* \, dt}{\sqrt{R_p(t)}(t-z)}= \frac{\omega}{\pi} \sqrt{R_p(z)} \int_L \frac{t^* \, dt}{\sqrt{R_p(t)}(t-z)}  \nonumber \\
&= \frac{\omega}{\pi} \sqrt{R_p(z)} \sum_{m=0}^{p-1} \int_{L_m} \frac{t^* \, dt}{\sqrt{R_p(t)}(t-z)} \nonumber \\
&= \frac{\omega}{\pi} \sqrt{R_p(z)} \sum_{m=0}^{p-1} \left[ \int_{-1}^0 \frac{(-\sigma^{2m+1})^* s (-\sigma^{2m+1}) \,ds}{\sqrt{R_p(-\sigma^{2m+1}s)}(-\sigma^{2m+1}s-z)}  + \int_0^1\frac{(\sigma^{2m})^* s (\sigma^{2m}) \, ds}{\sqrt{R_p(\sigma^{2m} s)}(\sigma^{2m}s-z)} \right] \nonumber \\
&= \frac{\omega}{\pi} \sqrt{R_p(z)}\sum_{m=0}^{p-1} \left[ \int_{-1}^0 \frac{-s \, ds}{\sqrt{R_p(s)}(\sigma^{2m+1}s+z)}  + \int_0^1\frac{s \,ds}{\sqrt{R_p(s)}(\sigma^{2m}s-z)} \right] \label{eq:gammapA}\\
&= \frac{\omega}{\pi}\sqrt{R_p(z)} \sum_{m=0}^{p-1} \int_0^1 \left[ \frac{-s}{\sqrt{R_p(s)}(\sigma^{2m+1}s-z)} +\frac{s}{\sqrt{R_p(s)}(\sigma^{2m}s-z)} \right]\, ds \nonumber \\
&= \frac{\omega}{\pi} \sqrt{R_p(z)} \int_0^1 \frac{s}{\sqrt{R_p(s)}}  \sum_{m=0}^{p-1}\left( \frac{-1}{\sigma^{2m+1}s-z} +\frac{1}{\sigma^{2m}s-z} \right) \, ds. \label{eq:gammapB}
\end{align}
Before obtaining a simple formula for the circulation density valid
for any $p \ge 1$, we first derive its particular instances for
$p=1,2,3$.  When $p=1$, cf.~figure \ref{fig:sheets}(a), we have
$\sigma = \exp(i\pi)=-1$ and $R_1(z)=(z-1)(z+1)=z^2-1$. Then, from
\eqref{eq:gammapA} we deduce
\begin{align}
\varphi_1(z) & = \frac{\omega}{\pi} \sqrt{R_1(z)} \left[ \int_{-1}^0 \frac{-s \, ds}{\sqrt{R_1(s)}(-s+z)}  + \int_0^1\frac{s \,ds}{\sqrt{R_1(s)}(s-z)} \right] \nonumber \\
& =  \frac{\omega}{\pi} \sqrt{R_1(z)} \int_{-1}^1 \frac{s \, ds}{\sqrt{R_1(s)}(s-z)}.
\label{eq:gammap1}
\end{align}
At the same time, formula \eqref{eq:gammapB} also gives rise to
\begin{equation}
\varphi_1(z) = \frac{\omega}{\pi}\sqrt{R_1(z)} \int_0^1 \frac{s}{\sqrt{R_1(s)}} \left( \frac{-1}{-s-z} +\frac{1}{s-z}  \right) \,ds = \frac{\omega}{\pi} \sqrt{R_1(z)} \int_0^1 \frac{2s^2 \,ds}{\sqrt{R_1(s)}(s^2-z^2)}
\label{eq:gammap1B}
\end{equation}
which is equivalent to \eqref{eq:gammap1}.

Next, for $p=2$, cf.~figure \ref{fig:sheets}(b), substituting
$\sigma=\exp(i\pi/2)=i$, $\sigma^2=-1$, $\sigma^3=-\sigma=-i$ and
$R_p(z)=R_2(z)=z^4-1$ into \eqref{eq:gammapB}, we obtain
\begin{align}
\varphi_2(z) &= \frac{\omega}{\pi} \sqrt{R_2(z)} \int_0^1 \frac{s}{\sqrt{R_2(s)}} \left( \frac{-1}{\sigma s-z} +\frac{1}{s-z} + \frac{-1}{-\sigma s-z} +\frac{1}{-s-z} \right) \, ds \nonumber \\
&= \frac{\omega}{\pi} \sqrt{R_2(z)} \int_0^1 \frac{s}{\sqrt{R_2(s)}} \left( \frac{2z}{s^2+z^2}  +\frac{2z}{s^2-z^2} \right)\,ds \nonumber \\
&= \frac{\omega}{\pi} \sqrt{R_2(z)} \int_0^1 \frac{4zs^3}{\sqrt{R_2(s)}(s^4-z^4)} \,ds. \label{eq:gammap2}
\end{align}
Then, for $p=3$, cf.~figure \ref{fig:sheets}(c), using
$\sigma=\exp(i\pi/3)$ we obtain the relations $\sigma^3=-1$,
$\sigma^4=-\sigma=(\sigma^2)^\ast=\sigma^{-2}$ and
$\sigma^5=\sigma^\ast=-\sigma^2=\sigma^{-1}$, such that we have
\begin{equation}
R_3(z) = (z-1)(z-\sigma)(z-\sigma^2)(z-\sigma^3)(z-\sigma^4)(z-\sigma^5) = z^6-1.
\label{eq:Rp3}
\end{equation}
Rewriting \eqref{eq:gammapB} as $\varphi_p(z) = \frac{\omega}{\pi}
\sqrt{R_p(z)} \int_0^1 \frac{s}{\sqrt{R_p(s)}} F_p(s,z) \, ds$ we
obtain the following expression for the factor $F_p(s,z)$ when $p=3$
\begin{align}
F_3(s, z) & = \sum_{m=0}^{2}\left( \frac{-1}{\sigma^{2m+1}s-z} +\frac{1}{\sigma^{2m}s-z} \right) \nonumber \\ 
& = 2s \left[ \frac{1}{s^2-z^2} - \frac{s^2-z^2}{s^4+z^2s^2+z^4} \right] 
=\frac{6z^2s^3}{s^6-z^6},
\label{eq:F3}
\end{align}
so that in this case relation \eqref{eq:gammapB} becomes
\begin{equation}
 \varphi_3(z)=\frac{\omega}{\pi}\sqrt{R_3(z)} \int_0^1 \frac{6z^2s^4}{\sqrt{R_3(s)}(s^6-z^6)} \,ds.
\label{eq:gammap3}
\end{equation}
{In order to extrapolate these calculations to an arbitrary
  number $p$ of sheets we use the following lemma
\begin{lemma} 
  For $ 0 < p \in \ZZ$, $0 \neq s \in \RR$ and
  $\sigma=\mbox{e}^{i\frac{\pi}{p}}$, the following equality holds.
\begin{equation}
\sum_{k=0}^{2p-1} \frac{(-1)^k}{z - \sigma^k s} = 
\frac{2ps^pz^{p-1}}{z^{2p}-s^{2p}}, \quad z \in \mathbb{C}. \label{eq:sum}
\end{equation}
\end{lemma}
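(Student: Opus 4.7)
The plan is to identify both sides as rational functions of $z$ with the same poles, residues, and behavior at infinity, and then invoke uniqueness. First I would observe that the $2p$ points $\sigma^k s$ for $k = 0, \ldots, 2p-1$ are exactly the roots of $z^{2p} - s^{2p} = 0$, so the LHS is a sum of simple poles located precisely at the zeros of the denominator on the RHS. Moreover, each term of the LHS decays like $1/z$ as $|z| \to \infty$, and the RHS decays like $z^{p-1}/z^{2p} = 1/z^{p+1}$, so both sides vanish at infinity. Hence both are rational functions with the same pole set and no polynomial part, and it suffices to match residues.

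Next I would compute the residues at a typical pole $z = \sigma^k s$. For the LHS the residue is manifestly $(-1)^k$. For the RHS, using $\frac{d}{dz}(z^{2p}-s^{2p}) = 2p z^{2p-1}$, the residue equals
\begin{equation*}
\frac{2p\, s^p (\sigma^k s)^{p-1}}{2p\,(\sigma^k s)^{2p-1}}
= \frac{s^{2p-1}\sigma^{k(p-1)}}{s^{2p-1}\sigma^{k(2p-1)}}
= \sigma^{-kp}.
\end{equation*}
Since $\sigma^p = \mbox{e}^{i\pi} = -1$, this simplifies to $(-1)^{-k} = (-1)^k$, matching the LHS residue for every $k = 0, \ldots, 2p-1$.

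Finally, two rational functions of $z$ with the same simple poles, the same residues at those poles, and the same (vanishing) limit at infinity must coincide, which establishes \eqref{eq:sum}. There is no real obstacle here: the argument is essentially a partial-fraction decomposition of the RHS, and the only step requiring a light touch is the simplification $\sigma^{-kp} = (-1)^k$, which relies on the identity $\sigma^p = -1$ guaranteed by the choice $\sigma = \mbox{e}^{i\pi/p}$. Alternatively, one could expand each term of the LHS as a geometric series in $s/z$ for $|z| > |s|$, collect like powers, and use $\sum_{k=0}^{2p-1}\sigma^{k(p+n)} = 2p\,\mathbf{1}_{p+n \equiv 0 \,(\mathrm{mod}\, 2p)}$ to extract only the terms $n = p, 3p, 5p, \ldots$, summing the resulting geometric series to recover the RHS and then extending by analytic continuation; I would present the residue-based argument as the cleaner of the two.
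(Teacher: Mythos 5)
Your proof is correct and follows essentially the same route as the paper's: both compute the residues of $G(z)=\frac{2ps^pz^{p-1}}{z^{2p}-s^{2p}}$ at the simple poles $\sigma^k s$, find they equal $(-1)^k$, and conclude that the difference between the two sides is entire and vanishes at infinity, hence identically zero (the paper invokes Liouville's theorem at this step, which is the same uniqueness argument you use). The residue computation via the derivative of the denominator and the simplification $\sigma^{-kp}=(-1)^k$ are both fine.
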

\begin{proof}
The function
\[
G(z)= \frac{2ps^pz^{p-1}}{z^{2p}-s^{2p}}
\]
has poles at $z=\sigma^ks$, $k=0,\dots, 2p-1$, and the residues
$\alpha_k$ around these poles are given by
\[
\alpha_k = \lim_{z \rightarrow \sigma^k s} (z-\sigma^k s)G(z) 
= \mbox{e}^{-i\pi k} = (-1)^k, \quad k=0,\dots, 2p-1.
\]
Since the function
\[
G(z)- \sum_{k=0}^{2p-1} \frac{(-1)^k}{z - \sigma^k s}, \quad z \in \CC
\]
is entire on the whole complex plane, it must be constant owing to
Liouville's theorem.  Moreover, since this function vanishes as $z
\rightarrow \infty$, this constant is zero, which completes the proof.
\end{proof}
Using relation \eqref{eq:sum} in \eqref{eq:gammapB}, we then obtain}
\begin{equation}
\varphi_p(z)=\frac{\omega}{\pi}\sqrt{1-z^{2p}} 
\int_0^1 \frac{2p z^{p-1}s^{p+1}}{\sqrt{1-s^{2p}}(s^{2p}-z^{2p})} \,ds,
\quad z \in [0,1].
\label{eq:gammapp}
\end{equation}
We remark that since the integrand expression in \eqref{eq:gammapp} is
purely real and the integral is over the segment $[0,1]$ of the real
axis, the quantity defined by \eqref{eq:gammapp} is purely real.  The
circulation density $\gamma(s)$ appearing in the Birkhoff-Rott
equation \eqref{eq:BR0} can then be deduced from \eqref{eq:gammapp}
using relation \eqref{eq:varphi} {and will also be purely real,
  as needed}. Noting that since on the contour $[0,1]$ we have $t=s$,
we can conclude that $\gamma_p(s) = \varphi_p(s)$, i.e., expression
\eqref{eq:gammapp} can be interpreted as the circulation densities for
different values of $p$.  {Consequently, we obtain the following main
  theorem.
\begin{theorem}
  The configuration of $p$ bent vortex sheets \eqref{eq:Lm} with the
  circulation density \eqref{eq:gammapp} is a relative equilibrium
  rotating with a constant rate of rotation $\omega$ around the
  origin.
\end{theorem}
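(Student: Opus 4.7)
The theorem is essentially a synthesis: it asserts that assembling the ingredients already developed in the section (the geometry \eqref{eq:Lm}, the compatibility verified by Proposition~1, the Riemann-Hilbert formula \eqref{eq:gamma}, the Lemma, and the reduction to \eqref{eq:gammapp}) does in fact produce a bona fide relative equilibrium. The plan is therefore to verify three things for the pair $\{L,\gamma_p\}$: (i) the density vanishes at all $2p$ sheet endpoints so that the induced velocity field is bounded throughout $\CC\backslash L$; (ii) the density is purely real so that it represents a genuine jump in tangential velocity; and (iii) the Birkhoff-Rott identity \eqref{eq:BR0} with $f(z)=i\omega z^{\ast}$ holds on every point of $L$.

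First I would invoke Proposition~1 together with the Riemann-Hilbert framework recalled in Section~\ref{sec:RH}. Since the compatibility conditions \eqref{eq:Cn} are satisfied for this geometry and this choice of $f$, the classical result of \cite{Muskhelishvili2008} guarantees that the function $\varphi$ defined by the Cauchy-type integral \eqref{eq:gamma} is H\"older-continuous on $L$, satisfies the singular integral equation \eqref{eq:BR0t}, and obeys the endpoint condition \eqref{eq:gamma0} at all $2p$ endpoints. This settles (i) and (iii) at once, modulo the equivalence of \eqref{eq:BR0t} and \eqref{eq:BR0} expressed via \eqref{eq:varphi}.

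Next I would reduce \eqref{eq:gamma} to the explicit form \eqref{eq:gammapp} by following the chain of manipulations \eqref{eq:gammapA}--\eqref{eq:gammapB}: splitting the integral over $L=\bigcup L_m$, using the invariance \eqref{eq:Rl} on each segment to recast every contribution as an integral over $s\in[0,1]$ with the real radicand $\sqrt{R_p(s)}=\sqrt{1-s^{2p}}$, and finally applying the Lemma to collapse the alternating sum $\sum_{k=0}^{2p-1}(-1)^k/(z-\sigma^k s)$ into the closed form $2ps^pz^{p-1}/(z^{2p}-s^{2p})$. This yields \eqref{eq:gammapp} for $z\in[0,1]$, which is the positive half of $L_0$; the values of $\varphi_p$ on the remaining sheet segments are obtained by the rotational symmetry of the configuration, since multiplication of $z$ by $\sigma^{\ell}$ permutes the sheets while leaving $R_p$ invariant by \eqref{eq:Rl}.

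For (ii), on $[0,1]\subset L_0$ we have $t=s$ and $dt/ds=1$, hence $\gamma_p(s)=\varphi_p(s)$; the integrand of \eqref{eq:gammapp} is manifestly real (and of definite sign) for $s,z\in(0,1)$, so $\gamma_p$ is real-valued there. Reality on every other segment then follows from the $p$-fold symmetry: the phase factor $\sigma^{2m}$ or $-\sigma^{2m+1}$ introduced when parameterizing $L_m$ by $z_m(s)$ cancels exactly between $\varphi_p(z_m(s))$ and the arc-length Jacobian $(dz_m/ds)^{-1}$ in \eqref{eq:varphi}, leaving a real quantity. Together with (i) and (iii), this identifies $\{L,\gamma_p\}$ as the claimed rotating equilibrium. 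The main obstacle I expect is the last bookkeeping step: carefully tracking the phase factors through \eqref{eq:varphi} on the inbound versus outbound halves of each bent sheet, so that reality of $\gamma_p$ is established unambiguously on every branch of $L$, not merely on the convenient real segment $[0,1]$.
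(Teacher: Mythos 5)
Your proposal is correct and follows essentially the same route as the paper, which presents the theorem as a direct consequence of Proposition~1, the inversion formula \eqref{eq:gamma}, the reduction \eqref{eq:gammapA}--\eqref{eq:gammapB} via the Lemma, and the reality remark following \eqref{eq:gammapp}. Your explicit tracking of the phase cancellation between $\varphi_p$ and $dz_m/ds$ on the segments $L_m$, $m\ge 1$, merely spells out the symmetry argument the paper invokes implicitly.
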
}

Since the rate of rotation $\omega$ is arbitrary, without loss of
generality hereafter we set $\omega = 1$.  The integral formula
\eqref{eq:gammapp} can in some cases be evaluated explicitly.  For
example, using the symbolic software {\tt Maple}, we obtain the
following explicit expressions for $\gamma_1(s)$ and $\gamma_2(s)$, $s
\in [0, 1]$,
\begin{subequations}
\label{eq:gammap12}
\begin{align}
\gamma_1(s) &= \sqrt{1-s^2}, \label{eq:gamma1ex} \\
\gamma_2(s) &= 2s \arccoth\left(\frac{1}{\sqrt{1-s^4}}\right), \label{eq:gamma2ex} 
\end{align}
\end{subequations}
where we note that \eqref{eq:gamma1ex} is equivalent to the well-known
formula \eqref{eq:sp1} with $\omega = 1$ and $a = 1$. On the other
hand, formula \eqref{eq:gamma2ex} appears to be a new solution.
Evaluation of formula \eqref{eq:gammapp} for $p \geqq 3$ requires
numerical approximation and our approach is described in the next
section.

\section{Numerical Approach}
\label{sec:numer}

The integral in formula \eqref{eq:gammapp} involves singularities at
$s = z$ and at $s = 1$, hence its numerical evaluation requires
special care and in this section we first introduce our numerical
approach which is then validated against the exact expression
\eqref{eq:gamma2ex} corresponding to the case $p = 2$. While integral
formula \eqref{eq:gammapp} is elegant due to its compactness, it does
not directly lend itself to an accurate approximation with a
quadrature. We thus proceed by isolating the terms representing the
singularity at $s = z$, so that the corresponding Cauchy-type
integrals can be evaluated explicitly. The integrals involving the
remaining terms are then approximated with a standard quadrature after
the singularity at $s = 1$ is eliminated with a suitable change of
variables.

For convenience, when evaluating the singular part we will
simultaneously consider contributions from the contours $[-1,0]$ and
$[0,1]$ (both of which are present in the equilibrium configuration
$L$ for all values of $p$, cf.~figure \ref{fig:sheets}). Defining $I_p
:= \int_0^1 \frac{2p z^{p-1}s^{p+1}}{\sqrt{-R(s)}(s^{2p}-z^{2p})}
\,ds$, such that $\varphi_p(z)=\frac{\omega}{\pi}\sqrt{-R_p(z)} \,
I_p(z)$ in \eqref{eq:gammapp}, and
\begin{equation}
G_p(s, z) = \sum_{\substack{m=1 \\ m \neq p}}^{2p-1} \frac{(-1)^m}{\sigma^m s - z},
\label{eq:Gp}
\end{equation}
we obtain $I_p(z) = I^s_p(z) + I^r_p(z)$, where the singular and
regular parts, respectively, $I^s_p(z)$ and $I^r_p(z)$, are
represented as follows for $p=1,2,\dots$
\begin{equation}
I_p^s(z) = \left\{
\begin{alignedat}{2}
& \int_0^1 \frac{2zs \, ds}{\sqrt{-R_p(s)}(s^2 - z^2)}, & \qquad & p \ \text{even} \\
& \int_0^1 \frac{2s^2 \, ds}{\sqrt{-R_p(s)}(s^2 - z^2)}, & \qquad & p \ \text{odd}
\end{alignedat}\right. ,
\label{eq:Ips}
\end{equation}
and
\begin{equation}
I_p^r(z) = \int_0^1 \frac{s}{\sqrt{-R_p(s)}} G_p(s,z) \, ds.
\label{eq:Ipr}
\end{equation}
We remark here that $R_p(z) = z^{2p} - 1 \le 0$ for any $z \in [0,1]$.
Hence, the ``-'' sign preceding $R_p(z)$ ensures that the square root
is real-valued.

\begin{figure}
\begin{center}
\includegraphics[width=0.7\textwidth]{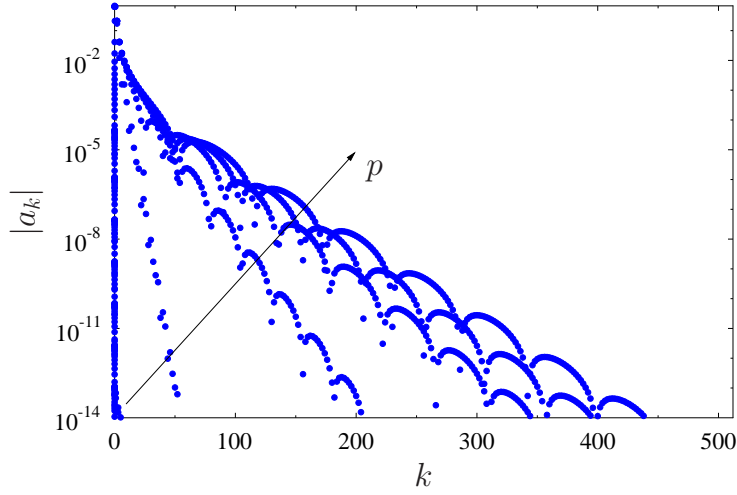}
\caption{Dependence of the magnitude of the Fourier coefficients $a_k$
  of the function $f_p(\theta)$, cf.~\eqref{eq:fp_spec}, on the
  wavenumber $k$ for different values of $p = 10, 109, 208, \dots,
  1000$. The trend with the increase of $p$ is indicated by an arrow.}
\label{fig:fp_spec}
\end{center}
\end{figure}

We first consider evaluation of the singular part $I_p^s(z)$,
cf.~\eqref{eq:Ips}, when $p$ is even. {We introduce a new
  function $r_p(z) := R_p(z) / (z^2 - 1)$ and in order to eliminate
  the singularity at $s = 1$ use the substitutions $s = \cos\theta$
  and $z = \cos\eta$ to change variables from $s$ and $z$ to
  $\theta,\eta \in [0,\pi/2]$.} Then, the singular part for even $p$
becomes
\begin{equation}
I_p^s(\cos\eta) = - 2 \cos\eta \, \int_{\pi/2}^0 \frac{\cos\theta \, d\theta}{\sqrt{r_p(\cos\theta)} \left(\cos^2\theta - \cos^2\eta\right)}.
\label{eq:Ips2}
\end{equation}
Since it does not appear possible to evaluate the integral in
\eqref{eq:Ips2} in closed form for values of $p > 2$, we expand the
function $f_p(\theta) := \frac{1}{\sqrt{r_p(\cos\theta)}}$ in a
Fourier series which, due to the fact that $f_p(\theta)$ is an even
function of $\theta$, reduces to a cosine series
\begin{equation}
f_p(\theta) = \sum_{k=-\infty}^{\infty} a_k e^{i k \theta} = a_0 + 2 \sum_{k=1}^{\infty} a_k \cos(k \theta), \quad a_k \in \RR.
\label{eq:fp_spec}
\end{equation}
As shown in figure \ref{fig:fp_spec}, the magnitudes of $|a_k|$,
$k=0,1,\dots$, of the Fourier coefficients in \eqref{eq:fp_spec} decay
exponentially, although the rate of decay is slower for larger values
of $p$.  In addition, the Fourier coefficients are nonzero for even
wavenumbers only, i.e., $a_{2k-1} = 0$, $k=1,2,\dots$. Using
representation \eqref{eq:fp_spec}, the singular integral
\eqref{eq:Ips2} then becomes
\begin{equation}
I_p^s(\cos\eta) = - 2 \cos\eta \bigg[ a_0 \underbrace{\int_{\pi/2}^0 \frac{\cos\theta \, d\theta}{\cos^2\theta - \cos^2\eta}}_{\Psi_0(\eta)}
+ 2 \sum_{k=1}^{\infty} a_k \underbrace{\int_{\pi/2}^0 \frac{\cos(k\theta) \cos\theta \, d\theta}{\cos^2\theta - \cos^2\eta}}_{\Psi_k(\eta)} \bigg], 
\quad p \ \text{even}.
\label{eq:Ipse}
\end{equation}
Following analogous steps to transform the singular integral
\eqref{eq:Ips} when $p$ is odd, we obtain
\begin{equation}
I_p^s(\cos\eta) = {- 2\,} a_0 \underbrace{\int_{\pi/2}^0 \frac{\cos^2\theta \, d\theta}{\cos^2\theta - \cos^2\eta}}_{\Phi_0(\eta)}
{- 4} \sum_{k=1}^{\infty} a_k \underbrace{\int_{\pi/2}^0 \frac{\cos(k\theta) \cos^2\theta \, d\theta}{\cos^2\theta - \cos^2\eta}}_{\Phi_k(\eta)}, 
\quad p \ \text{odd}.
\label{eq:Ipso}
\end{equation}
In numerical computations the sums in \eqref{eq:Ipse} and
\eqref{eq:Ipso} are truncated after $M$ terms and the value of $M$ can
be deduced from the data in figure \ref{fig:fp_spec} such that this
truncation results in errors not exceeding the machine precision.  The
singular integrals
\begin{subequations}
\label{eq:PsiPhi} 
\begin{align}
\Psi_k(\cos\eta) & := \int_{\pi/2}^0 \frac{\cos(k\theta) \cos\theta \, d\theta}{\cos^2\theta - \cos^2\eta},  \label{eq:Psik} \\
\Phi_k(\cos\eta) & := \int_{\pi/2}^0 \frac{\cos(k\theta) \cos^2\theta \, d\theta}{\cos^2\theta - \cos^2\eta} \label{eq:Phik} 
\end{align}
\end{subequations}
can be evaluated for $k=0,1,\dots$ in closed form, for example using
the symbolic software {\tt Maple}, for different values of the
wavenumber $k$ and for larger values of $k$ this is facilitated by the
recursive relations they satisfy. See \ref{sec:PsiPhi} for details.

The regular part $I_p^r(z)$, cf.~\eqref{eq:Ipr}, is treated by first
transforming it to the variables $\theta,\eta \in [0,\pi/2]$ via the
substitutions $s = \cos\theta$ and $z = \cos\eta$, which removes the
singularity at $s = 1$ and yields
\begin{equation}
I_p^r(\cos\eta) = - \int_{\pi/2}^0 \frac{\cos\theta}{\sqrt{r_p(\cos\theta)}} G_p(\cos\theta,\cos\eta), d\theta.
\label{eq:Ipr2}
\end{equation}
Then, integral \eqref{eq:Ipr2} can be approximated with a standard
approach such as the Gauss-Legendre quadrature.

We remark that given the presence of the factor $1/\sqrt{1-s^2}$,
which is the Chebyshev weight, in integrals
\eqref{eq:Ips}--\eqref{eq:Ipr}, one may be tempted to transform these
integrals to the interval $[-1,1]$ and then approximate them using
Chebyshev quadratures. However, the extension of the real-valued
function $\frac{1}{\sqrt{r_p(s)}}$ to the complex plane involves
branch cuts, such that these integrals become multi-valued when one
integrates across zero. Thus, it is important for the integrals
\eqref{eq:Ips}--\eqref{eq:Ipr} to be defined on $[0,1]$ where the
integrand expressions remain single-valued.

\begin{figure}
\begin{center}
\includegraphics[width=0.7\textwidth]{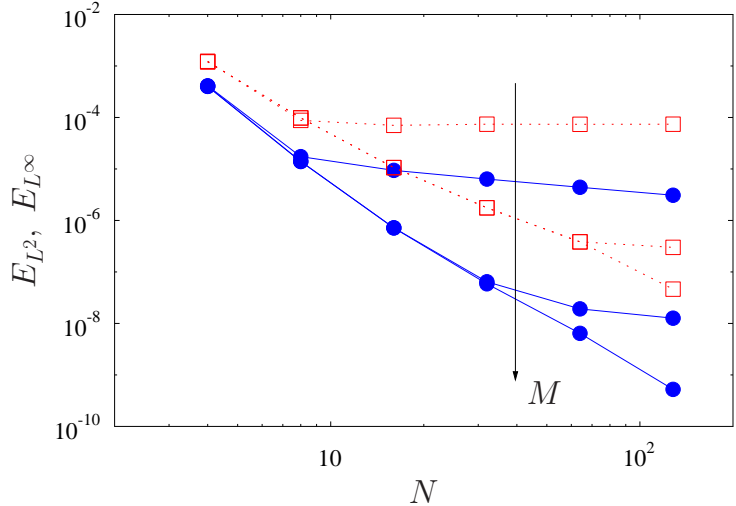}
\caption{Dependence of (blue solid circles) the $L^2$ error
  \eqref{eq:E2} and (red empty squares) $L^{\infty}$ error
  \eqref{eq:Einf} on the number of grid points $N$ for different
  truncation levels $M = 10, 15, 20$. The trend with the increase of
  $M$ is indicated by an arrow.}
\label{fig:ErrM}
\end{center}
\end{figure}
We now proceed to validate the approach described above and do so by
comparing the numerical approximation of \eqref{eq:gammapp} for $p =
2$ with the exact solution \eqref{eq:gamma2ex}.  There are two
numerical parameters in the problem, namely, the number of terms $M$
after which the Fourier series in \eqref{eq:Ipse} is truncated and the
number of gridpoints $N$ in the Gauss-Legendre quadrature used to
approximate \eqref{eq:Ipr2}. Denoting the solution obtained
numerically with these parameters $\gamma^{M,N}_2$, we consider the
following two error measures
\begin{subequations}
\label{eq:E} 
\begin{align}
E_{L^2} & := \| \gamma^{M,N}_2 - \gamma_2 \|_{L^2(0,1)}, \label{eq:E2} \\
E_{L^\infty} & := \| \gamma^{M,N}_2 - \gamma_2 \|_{L^{\infty}(0,1)}. \label{eq:Einf} 
\end{align}
\end{subequations}
The dependence of these errors on the number of grid points $N$ for
different truncation levels $M$ is shown in figure \ref{fig:ErrM},
where we see that provided $M$ is sufficiently large, the error
measures \eqref{eq:E2} and \eqref{eq:Einf} show, respectively,
fourth-order and third-order rate of convergence with respect to grid
refinement in the quadrature (spectral convergence is not achieved
because of the presence of the square-root function in the integrand
of \eqref{eq:Ipr2}). In the computational results presented in the
next section we use $N = 2048$ (which gives a much finer resolution
than used in figure \ref{fig:ErrM}), whereas the parameter $M$ is
adjusted for different $p$ based on the data presented in figure
\ref{fig:fp_spec} to ensure that errors due to truncation of the
series in \eqref{eq:Ipse} remain at the level of the machine
precision.

\section{Results}
\label{sec:results}

In this section we present circulation densities obtained for
different values of $p$ by approximating formula \eqref{eq:gammapp} as
discussed in Section \ref{sec:numer}. Since the circulation densities
are defined up to a multiplicative factor $\omega$, we normalize them
such that the equilibrium configurations $L$ corresponding to
different values of $p$, cf.~\eqref{eq:Lm}, have the same {\em total}
circulation equal to one. We will thus focus on the following
normalized quantity
\begin{equation}
\widetilde{\gamma}_p(s)  := \frac{\gamma_p(s)}{2 p \int_0^1 \gamma_p(s')\, ds'}, \quad s \in [0,1], \ p=1,2,\dots
\label{eq:tildegammap}
\end{equation}
which is plotted for $p=1,2,\dots,10$ in figure \ref{fig:gammap}(a).
We recognize that the distributions obtained for $p=1$ and $p=2$
correspond to the exact solutions \eqref{eq:gamma1ex} and
\eqref{eq:gamma2ex}. It is interesting to note that, except for the
case $p=1$, all circulation densities vanish at $s = 0$ which
corresponds to the center of rotation, a property which is not a
priori guaranteed by the general inversion formula \eqref{eq:gamma},
but is evident in the form of \eqref{eq:gammapp}. We also observe that
as $p$ increases the magnitude of $\widetilde{\gamma}_p(s)$ is reduced
which is a consequence of the fact that since the circulation of the
entire equilibrium configuration is fixed, the circulations of the
individual sheets decrease as their number $p$ goes up. In addition,
in figure \ref{fig:gammap}(a) we observe that there is no apparent
difference between the distributions $\widetilde{\gamma}_p(s)$
obtained for $p$ even and $p$ odd, and that these distributions become
more skewed towards the outer endpoint of the sheet at $s = 1$ as $p$
increases.  This latter effect is further explored in figure
\ref{fig:gammap}(b) where we show the normalized circulation densities
$\widetilde{\gamma}_p(s)$ corresponding to much higher values of $p =
20, 119,\dots, 2000$. In this figure we observe that
$\widetilde{\gamma}_p(s) \rightarrow 0$ as $p \rightarrow \infty$ for
all values of $s \in [0,1-\epsilon)$, $\epsilon > 0$, except for a
small neighborhood near the endpoint $s = 1$ in which the entire
circulation of the vortex sheet becomes localized in the form of a few
sign-changing oscillations. In order to shed further light on this
behavior, in figure \ref{fig:gammapZ} we show the quantity $2p \,
\widetilde{\gamma}_p(s)$ in a small neighborhood of the right endpoint
$s = 1$, corresponding to $0.625 \%$ of the entire interval $[0,1]$.
We observe that in the limit $p \rightarrow \infty$ the dominant
``wiggle'' becomes steeper and more localized. The significance of
this observation is discussed in the next, final section.

\begin{figure}
\centering \vspace*{-3.0cm}
\subfigure[]{\includegraphics[width=0.7\textwidth]{./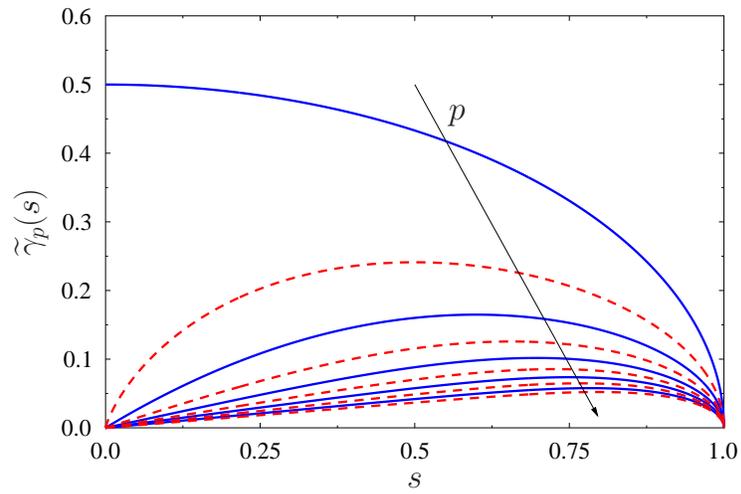} } 
\subfigure[]{\includegraphics[width=0.7\textwidth]{./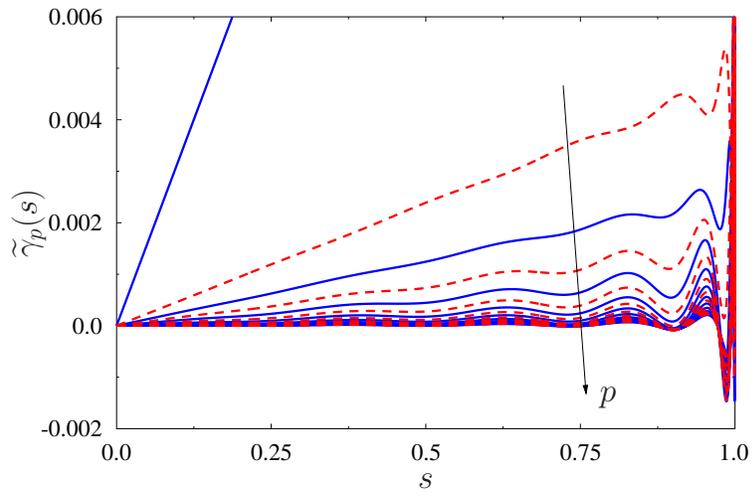}}
\caption{Normalized circulation densities $\widetilde{\gamma}_p(s)$,
  cf.~\eqref{eq:tildegammap}, as functions of position $s \in [0,1]$
  for (a) $p=1,2,\dots,10$ and (b) $p=20,119,\dots,2000$. Red dashed
  lines and blue solid lines correspond to, respectively, $p$ even and
  odd. The trend with the increase of $p$ is indicated by an arrow.}
\label{fig:gammap}
\end{figure}

\begin{figure}
\centering 
\includegraphics[width=0.7\textwidth]{./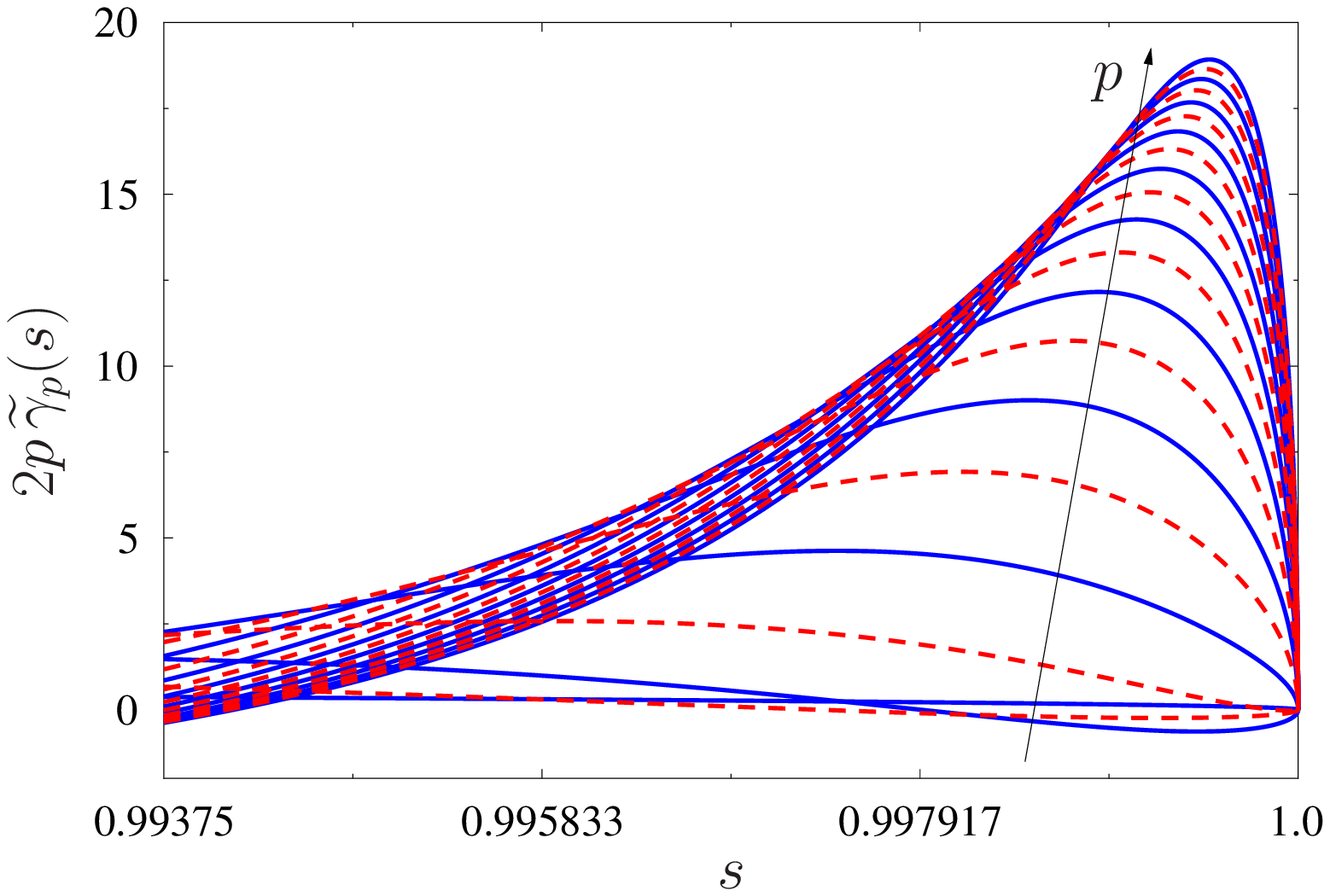}
\caption{Rescaled normalized circulation densities $2p
  \,\widetilde{\gamma}_p(s)$ obtained for $p=20,119,\dots,2000$ in a
  small neighborhood of the right endpoint $s = 1$. Red dashed lines
  and blue solid lines correspond to, respectively, $p$ even and odd.
  The trend with the increase of $p$ is indicated by an arrow.}
\label{fig:gammapZ}
\end{figure}

\section{Discussion and Conclusions}
\label{sec:final}

In this investigation we have considered the question of finding
relative equilibria involving vortex sheets in 2D potential flows. By
framing this question in terms of the Riemann-Hilbert problem in
complex analysis we proposed a general two-step approach to finding
such equilibria, where one first constructs a geometric configuration
of vortex sheets which satisfies the compatibility conditions
\eqref{eq:Cn} and then finds the corresponding circulation densities
by evaluating formula \eqref{eq:gamma}. The compatibility conditions
ensure that the circulation densities vanish at all sheet endpoints,
which is necessary for the velocity field to be everywhere well
behaved, {and additional symmetry arguments are used to ensure
  the circulation densities are real-valued}.  Using this approach
together with analytical computations we constructed a family of
rotating equilibria involving an even number $2p$ of radially-oriented
straight segments revolving about a common center of rotation where
they meet, cf.~figure \ref{fig:sheets}, which generalizes the
well-known solution \eqref{eq:sp1} corresponding to a single rotating
vortex sheet. The circulation densities of the rotating sheets are
given in closed form for the integral formula \eqref{eq:gammapp}, the
evaluation of which however requires a careful numerical approach
described in Section \ref{sec:numer}. This construction was
facilitated by a judicious choice of the geometric configuration $L$,
cf.~\eqref{eq:Lm}, consisting of $p$ bent sheets with endpoints at the
vertices of a regular $2p$-gon whose midpoints touch at the center of
rotation. As a result, it was not necessary to consider the center of
rotation as a separate sheet endpoint, although the obtained
circulation densities $\gamma_p(s)$ do vanish there (i.e., at $s = 0$)
for all $p > 1$, cf.~\eqref{eq:gammapp} and figure
\ref{fig:gammap}(a), meaning that the center of rotation may in fact
be interpreted as a sheet endpoint when $p > 1$. {It is possible
  that these new equilibria could also be obtained using other
  methods, such as, e.g., conformal transformations based on the
  Schwarz-Christoffel map.}

We found that in the limit of the equilibrium configurations
consisting of a large number of sheets, i.e., when $p \rightarrow
\infty$, the normalized circulation densities
$\widetilde{\gamma}_p(s)$, cf.~\eqref{eq:tildegammap}, vanish
everywhere except for a vanishing neighborhood of the outer endpoint
where the entire circulation of the sheet is localized in the form of
a rapid oscillation with intensity increasing with $p$. Thus, the
limiting configuration appears to have the form of a hollow vortex of
unit radius bounded by a constant-density circular vortex sheet, which
is also a well known equilibrium solution. It must be, however,
emphasized that this limit ought to be considered in some suitable
``weak'' sense, rather than in the classical pointwise (uniform)
sense. The reason is that in the neighborhood of $s = 1$ the values of
$2p \,\widetilde{\gamma}_p(s)$ diverge as $p \rightarrow \infty$ with
the sequence approaching a singular distribution at $s = 1$,
cf.~figure \ref{fig:gammapZ}. This limiting distribution clearly does
not belong to the H\"older space in which the circulation densities
most be defined \cite{Muskhelishvili2008} and is also not consistent
with conditions \eqref{eq:gamma0}. {Since both the relative equilibria
  constructed here and the hollow vortex with a vortex sheet are
  solutions of 2D Euler equations defined in a suitable distributional
  sense \cite{mb02}, we thus have a countably infinite sequence of
  distributional solutions of the first type converging to the
  solution of the second type which is characterized by a different
  topology. Understanding the precise mathematical sense of this
  convergence is an interesting open question.}

Since our ansatz for the rotating equilibrium configurations involves
an even number ($2p$) of segments, cf.~figure \ref{fig:sheets}, an
interesting question concerns construction of equilibria consisting of
an odd number of segments (such as, e.g., $\upY$). In such case, the
center of rotation would need to be treated as a separate endpoint
resulting in more complicated form of the compatibility conditions
\eqref{eq:Cn}. However, we expect that in the limit of a large number
of segments the behavior of such hypothetical equilibria would be
similar to what we observed here in figures \ref{fig:gammap}(b) and
\ref{fig:gammapZ}. {A related interesting question concerns the
  existence of equilibria with less symmetry. In this context, we have
  considered the following configuration of three sheets
\begin{equation}
L = [-1,-\alpha] \ \cup \ [\alpha,1] \ \cup \ [-i\beta,i\beta]
\label{eq:L3}
\end{equation}
defined for some $\alpha, \beta > 0$ and observed that for all $\alpha
\in (0,1)$ there exists $\beta = \beta(\alpha)$ such that this
configuration forms an equilibrium. Moreover, $\beta(\alpha)
\rightarrow 1$ as $\alpha \rightarrow 0$, meaning that in this limit
configuration \eqref{eq:L3} approaches the equilibrium corresponding
to $p=2$, cf.~\eqref{eq:gamma2ex} and figure \ref{fig:sheets}(b).
Discussion of this and other equilibria generalizing the
configurations found in this study and possibly involving curved
vortex sheets is deferred to a future publication.}

Another, more fundamental, question concerns equilibria in a
translating frame of reference. Using $f(z) = U$ in \eqref{eq:Cn} it
can be shown that for the geometric configuration \eqref{eq:Lm}
already the first compatibility condition is violated, i.e., $C_0 \neq
0$, demonstrating that this configuration cannot form a translating
equilibrium. While in the present study configuration \eqref{eq:Lm}
was proposed based on evidence from numerical experiments, in more
general situations geometric configurations $L$ which form equilibria
can be found by solving equations \eqref{eq:Cn} treated as a problem
defining $L$.  Yet another interesting problem is the stability of the
equilibrium configurations discovered here. We plan to address some of
these questions in the near future.

\section*{Acknowledgments}

We thank Marcel Rodney for helpful discussions at early stages of this
research project in 2011 {and Kevin O'Neil for bringing
  configuration \eqref{eq:L3} to our attention}. The first author
acknowledges partial support through an NSERC (Canada) Discovery
Grant. The second author was partially supported by the JST Kakenhi
(B) grant No.~18H01136, the RIKEN iTHEMS program and a grant from the
Simons Foundation. {The authors would also like to thank the
  Isaac Newton Institute for Mathematical Sciences for support and
  hospitality during the programme ``Complex analysis: techniques,
  applications and computations'' where the final version of this
  paper was prepared. This work was supported by EPSRC grant number
  EP/R014604/1.}

\FloatBarrier

\appendix
\section{Evaluation of Singular Integrals \eqref{eq:PsiPhi}}
\label{sec:PsiPhi}

For modest values of $k$ the principal-value integrals
\eqref{eq:Psik}--\eqref{eq:Phik} can be evaluated symbolically, for
example, using the software package {\tt Maple}. Denoting $X :=
\cos\phi$ and $Y :=
\frac{\arccoth\left(\frac{1}{\sin\phi}\right)}{\sin\phi}$, where $\phi
\in [0,\pi/2]$, these integrals take the form
\begin{subequations}
\label{eq:Psi0246}
\begin{align}
\Psi_0(\cos\eta) & = Y,  \\
\Psi_2(\cos\eta) & = (2X^2-1)Y + 2, \\
\Psi_4(\cos\eta) & = (8X^4-8X^2+1)Y+8X^2-\frac{8}{3}, \\
\Psi_6(\cos\eta) & = (32X^6-48X^4+18X^2-1)Y+32X^4-\frac{80 X^2}{3}+\frac{46}{15}, \\
\dots & \nonumber
\end{align}
\end{subequations}
\begin{subequations}
\label{eq:Phi02468}
\begin{align}
\Phi_0(\cos\eta) & = \frac{\pi}{2},  \\
\Phi_2(\cos\eta) & = \pi X^2, \\
\Phi_4(\cos\eta) & = \pi (4 X^4-2 X^2), \\
\Phi_6(\cos\eta) & = \pi (16X^6-16X^4+3X^2), \\
\Phi_8(\cos\eta) & = \pi (64X^8-96X^6+40X^4-4X^2), \phantom{==========} \\
\dots & \nonumber
\end{align}
\end{subequations}
where due to the properties of the function $f(\theta)$ we considered
only even values of $k$. Evidently, the expressions in
\eqref{eq:Psi0246} can be represented as $\Psi_{2m}(\phi) =
p_{2m}(X)\, Y + r_{2m}(X)$ and as can be verified by inspection the
polynomials $p_{2m}(X)$ and $r_{2m}(X)$ satisfy the following
recursion relations
\begin{subequations}
\label{eq:recPsi}
\begin{align}
p_{2m+2}(X) & = (4X^2-2)p_{2m}(X)-p_{2m-2}(X), \quad m \geqq 1, \\ 
p_0(X) &=1, \quad p_2(X)=2X^2-1, \\
r_{2m+2}(X) & = q_{2m+2}(X) + s_{2m+2}, \\
q_{2m+2}(X) & = (4X^2-2)r_{2m}(X)-r_{2m-2}(X), \quad m \geqq 1, \\ 
q_0(X) &=0, \quad q_2(X)=2, \\
s_{2m+2} &= \frac{N_{2m+2}}{D_{2m+2}}, \quad N_{2m} = (-1)^m \, 4, \\
D_{2m+2} & = 2 D_{2m} - D_{2m-2} + 8, \quad m \geqq 1, \quad D_0 = 3, \ D_2 = 1,
\end{align}
\end{subequations}
where $s_{2m}$ are rational numbers whereas $D_{2m}$ and $N_{2m}$ are
integers. Analogously, the expressions in \eqref{eq:Phi02468} can be
represented as $\Phi_{2m}(\phi) = \pi Q_{2m}(X)$ where the polynomials
$Q_{2m}(X)$ follow the recursion relations
\begin{subequations}
\label{eq:recPhi}
\begin{align}
Q_{2m+2}(X) & = (4X^2-2)Q_{2m}(X)-Q_{2m-2}(X), \quad m \geqq 2, \\ 
Q_2(X) &= X^2, \quad Q_4(X)=4X^4 - 2X^2, \\
Q_0(X) &= \frac{1}{2}.
\end{align}
\end{subequations}
Finally, expressions \eqref{eq:Psi0246}--\eqref{eq:Phi02468} together
with the recursion relations \eqref{eq:recPsi}--\eqref{eq:recPhi} make
it possible to evaluate the principal-value integrals
\eqref{eq:Psik}--\eqref{eq:Phik} for arbitrary values of $k$.



\end{document}